\newtheorem{theorem}{Theorem}
\theoremstyle{plain}
\newtheorem{lemma}{Lemma}
\newtheorem{proposition}{Proposition}
\newtheorem{remark}{Remark}
\newtheorem{application}{Application}
\numberwithin{equation}{section}
\begin{document}

\begin{center}
\textbf{A SIMPLE NOTE ON SOME EMPIRICAL STOCHASTIC PROCESS AS A TOOL IN
UNIFORM L-STATISTICS WEAK LAWS} \\[0pt]
Gane Samb LO\\[0pt]
\end{center}

\bigskip

\begin{center}
\textbf{LERSTAD}, Universit\'e Gaston Berger, Saint-Louis, SENEGAL.\\[0pt]
\textbf{LSTA}, Universit\'e Pierre et Marie Curie, Paris, FRANCE.\\[0pt]
\bigskip ganesamblo@ufrsat.org, ganesamblo@yahoo.com
\end{center}

\bigskip
\noindent Keywords :\textit{Tightness, weak convergence, Gaussian process, functional spaces, empircal and quantile
process, empirical stochastic process}
\bigskip \bigskip 

\begin{center}
\textbf{Abstract : }
\end{center}

\noindent In this paper, we are concerned with the stochastic process 
\begin{equation}
\beta _{n}(q_{t},t)=\beta _{n}(t)=\frac{1}{\sqrt{n}}\sum_{j=1}^{n}\left\{
G_{t,n}(Y(t))-G_{t}(Y_{j}(t))\right\} q_{t}(Y_{j}(t)),  \tag{A}
\end{equation}
where for $n\geq1$ and $T>0$, the sequences $\{Y_{1}(t),Y_{2}(t),...,Y_{n}(t),t\in [0,T]\}$
are independant observations of some real stochastic process ${Y(t),t\in
[0,T]}$, for each $t \in [0,T]$, $G_{t}$ is the distribution function of $%
Y(t)$ and $G_{t,n}$ is the empirical distribution function based on $%
Y_{1}(t),Y_{2}(t),...,Y_{n}(t)$, and finally $q_{t}$ is a bounded real
fonction defined on $\mathbb{R}$. This process appears when investigating
some time-dependent L-Statistics which are expressed as a function of some
functional empirical process and the process (A). Since the functional
empirical process is widely investigated in the literature, the process
reveals itself as an important key for L-Statistics laws. In this paper, we
state an extended study of this process, give complete calculations of the
first moments, the covariance function and find conditions for asymptotic
tightness.\newline

\section{\protect\bigskip Introduction}

\label{sec1}

In this paper, we are concerned with the uniform weak laws of a special
process occuring in some research areas like Actuarial Sciences when measuring
heavy losses, Welfare Sciences when measuring inequality coefficients and
poverty indices. As well, it may be applied for general L-statistics. In
order to define it, let $n\geq 1$ be a positive integer and $%
Y_{1},Y_{2},...,Y_{n}$ independent and identically distributed random
variables with values in $\ell ^{\infty }([0,T]),$ the space of real bounded
functions defined on time space $[0,T],$ where $T$ is a fixed positive real
number. This means that the observations depend on the time $t\in \lbrack
0,T],$ so that we may also write them in the form 
\begin{equation*}
\{Y_{1}(t),Y_{2}(t),...,Y_{n}(t),t\in \lbrack 0,T]\}
\end{equation*}
and we represent the order statistics, when needed, by $Y_{1,n}(t)\leq
Y_{2,n}(t)\leq ...\leq Y_{n,n}(t).$ Now let $k\geq 1$ and $%
0<t_{1}<t_{2}<...<t_{k}\leq T,$ $G_{t_{1},t_{2},...,t_{k}}$ will stand for
the distribution function of $%
(Y_{j}(t_{1}),Y_{j}(t_{2}),...,Y_{j}(t_{k}))^{t}.$ Also, for each $t\in
\lbrack 0,T],$ we denote by $G_{t,n}$ the empirical distribution function
based on the sample $Y_{1}(t),Y_{2}(t),...,Y_{n}(t),$ that is, for each $x\in 
\mathbb{R},$%
\begin{equation*}
nG_{t,n}(x)=\sum_{j=1}^{n}1_{(Y_{j}(t)\leq x)}.
\end{equation*}
From now, we suppose that all the random variables used here are defined on
the same probability space $(\Omega ,\mathcal{A},\mathbb{P}).$ We are now
able to introduce the process 
\begin{equation}
\beta _{n}(q_{t},t)=\beta _{n}(t)=\frac{1}{\sqrt{n}}\sum_{j=1}^{n}\left\{
G_{t,n}(Y(t)-G_{t}(Y_{j}(t)\right\} q_{t}(Y_{j}(t)),  \label{def01}
\end{equation}
where for each $t\in \lbrack 0,1],$ $q_{t}:\mathbb{R}\longmapsto \mathbb{R}$
is a measurable bounded function. For $q\equiv 1,$ we write it $B_{n}^{\ast
}(t)=\beta _{n}(1,t)$ and called it as the simple process. This process $%
\{\beta _{n}(t),t\in \lbrack 0,T]\}$ may appear when dealing with
time-dependant L-Statistics of the form 
\begin{equation}
J_{n}(t)=\frac{1}{n}\sum_{j=1}^{Q_{n}(t)}c(j/n)q_{0}(Y_{j,n}(t)),
\label{def02}
\end{equation}
where $c(\cdot )$ (resp. $q_{0}(\cdot ))$ is a function defined on $[0,1]$
(resp. $\mathbb{R}$) and where for each fixed $t\in [0,T]$, $Z(t)>0$ is some
threshold such that $Y_{Q_{n},n}(t)\leq Z(t) <Y_{Q_{n}+1,n}(t)$. By denoting $%
R_{j,n}(t)$ the rank statistics of $Y_{j}(t),$ (\ref{def02}) may be written,
when the distribution functions $G_{t}$ are continuous, as 
\begin{equation*}
J_{n}(t)=\frac{1}{n}\sum_{j=1}^{n}c(R_{j,n}(t)/n)q(Y_{j}(t))\mathbb{I}%
(Y_{j}(t)\leq Z(t))
\end{equation*}
\begin{equation*}
=\frac{1}{n}\sum_{j=1}^{n}c(G_{t,n}(Y_{j}(t))q_{1}(Y_{j}(t)),
\end{equation*}
where $q_{1}(Y(t))=q_{0}(Y(t))\mathbb{I}(Y(t)\leq Z(t))$. Under some
conditions (see \cite{lo1}), (\ref{def02}) may be uniformly approximated by
the representation, as $n\rightarrow \infty ,$ 
\begin{equation*}
J_{n}(t)=\frac{1}{n}\sum_{j=1}^{n}c(G_{t}(Y_{j}(t))q_{1}(Y_{j}(t))
\end{equation*}
\begin{equation*}
+\frac{1}{n}\sum_{j=1}^{n}\left\{ G_{t,n}(Y(t)-G_{t}(Y_{j}(t)\right\}
c^{\prime }(G_{t}(Y_{j}(t))q_{1}(Y_{j}(t))+o^{\ast}_{P}(n^{-1/2}),
\end{equation*}

\noindent where $c^{\prime }$ is the derivative function of $c$, and $%
u^{\ast}_{n}=o^{\ast}_{P}(1)$ stands for the convergence to zero in
outer-probability, that is there exists a sequence of random variables $%
u_{n} $ converging to zero in probability as $n\rightarrow +\infty$ and $%
\|u^{\ast}_{n}\| \leq \|u_{n}\|$ for $n\geq 1$. Putting 
\begin{equation*}
J(t)=\mathbb{E}c(G_{t}(Y_{j}(t))q_{1}(Y_{j}(t))=\int_{\mathbb{R}%
}c(G_{t}(y))q_{1}(y)dG_{t}(y),
\end{equation*}
we have, for $q_{t}(\cdot )=c^{\prime }(G_{t}(\cdot )q_{1}(\cdot ),$ as $%
n\rightarrow \infty ,$%
\begin{equation}
\sqrt{n}(J_{n}(t)-J(t))=\alpha _{n}(t)+\beta _{n}(q_{t},t)+o_{P}(1),
\label{def03}
\end{equation}
where 
\begin{equation*}
\alpha _{n}(t)=\frac{1}{\sqrt{n}}\sum_{j=1}^{n}\left\{
c(G_{t}(Y_{j}(t))q_{1}(Y_{j}(t))-\mathbb{E}c(G_{t}(Y_{j}(t))q_{1}(Y_{j}(t))%
\right\} ,
\end{equation*}

\noindent and this is nothing else but the functional empirical process $\mathbb{G}%
_{n} $ so that 
\begin{equation*}
\alpha _{n}(t)=\mathbb{G}_{n}(g_{t})=\frac{1}{\sqrt{n}}\sum_{j=1}^{n}\left\{
g_{t}(Y_{j})-\mathbb{E}g_{t}(Y_{j})\right\} ,
\end{equation*}
where $g_{t}$ is the real function defined on $\ell ^{\infty }([0,T])$
satisfying 
\begin{equation*}
g_{t}(x)=c(G_{t}(x(t))q_{1}(x(t)), x\in \ell ^{\infty }([0,T]).
\end{equation*}
Statistics like (\ref{def02}) thus are present in many situations in
connection with L-Statistics (see \cite{hemlers1}, \cite{hemlers2}, \cite%
{hemlers3}) and naturally occur is Acturial Sciences and in inequality
measures (see \cite{zikitis}), and more recently in poverty measures (see 
\cite{lo1}, \cite{lo2}). In all these fields, we may be faced not to find
simple asymptotic normality results, but to derive uniform asymptotic laws
for the time-dependant statistics (with the parameter $t\in \lbrack 0,T] $)
and functional asymptotic laws with respect to the class of functions $%
\mathcal{F}=\{(g_{t},q_{t}),t\in \lbrack 0,T]\}.$

\bigskip

This motivated us to undertake a special study of $\beta _{n}$ and its
connection with the empirical process as general key tools. This study needs
much calculations that may be superfluous in each particular application.
We thus aim to characterize this process here and present our results as
general tools to be used further in statistical works as packages. In all
the paper, we suppose that the distribution functions $G_{t}$ are continuous
and increasing.

\bigskip

Since the calculations related to this study are tremendous, we are going to
give here the characteristics of the process. Examples of computations that
lead to the results stated here are given in the beginning of the proof of the first theorem
while the full paper are given in \cite{lo3}.\\

\bigskip 

The paper is organized as follows. We entirely describe the weak law the process
in Section 2. In Section 3, the weak law of the sum of a process of type (\ref{def01}) with a functional empirical process is given while Section 3 is devoted to the weak law of a couple of statistics of type (\ref{def01}). The paper is finished by a conclusion.

\section{Law of the general process} \label{sec2}

We now consider the process 
\begin{equation*}
\beta _{n}^{\ast }(t)=\sqrt{n}\beta _{n}(t)=\sum_{j=1}^{n}\left\{
G_{t,n}(Y_{j}(t))-G_{t}(Y_{j}(t))\right\} q_{t}(Y_{j}(t)).
\end{equation*}

\noindent Before we present our main result, define

\begin{equation*}
g(q,t,s)=\int \left( \int_{x\geq u}q_{t}(x)G_{t}(x)\right) \left(
\int_{y\geq v}q_{s}(y)G_{t}(y)\right) dG_{t,s}(u,v),
\end{equation*}

\begin{equation*}
c_{2}(t)=\int \left( \int_{x\geq u}q_{t}(x)G_{t}(x)\right) ^{2}dG_{t}(u)
\end{equation*}
and this convention, for a function $h,$%
\begin{equation*}
\mathbb{E}_{t}h=\int h(u)dG_{t}(u)
\end{equation*}

\begin{theorem}
If there is a universal constant $K_{0,}$ such that there exists $\delta
>0, $%
\begin{equation*}
\left\vert s-t\right\vert \leq \delta \Longrightarrow \left\vert 2(c_{2}(t)-g(q,t,s)) \right.
\end{equation*}

\begin{equation}
\left. +\left\{ (\mathbb{E}_{t}G_{t}q_{t})(\mathbb{E}%
_{s}G_{s}q_{t})-(\mathbb{E}_{t}G_{t}q_{t})^{2}\right\} \right\vert \leq 
\frac{3}{2}K_{0}\left\vert s-t\right\vert ^{1+r},  \label{unif04}
\end{equation}

\noindent then $\{\beta _{n}(t),0\leq t\leq T\}$ converges to a $\ell ^{\infty
}([0,T])-$Gaussian process with covariance function 
\begin{equation*}
\Gamma _{1}(q_{t},q_{s},s,t)=g(q,t,s)-(\mathbb{E}_{t}G_{t}q_{t})(\mathbb{E}%
_{s}G_{s}q_{s}).
\end{equation*}
\end{theorem}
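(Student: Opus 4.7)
The plan is to reduce $\beta_{n}(t)$ to a sum of i.i.d.\ terms via a V-statistic projection, then combine finite-dimensional CLT with a tightness argument driven by the moment condition (\ref{unif04}). Writing $G_{t,n}(Y_{j}(t))=n^{-1}\sum_{i}1_{(Y_{i}(t)\leq Y_{j}(t))}$, one obtains
\begin{equation*}
\beta_{n}(t)=\frac{1}{n^{3/2}}\sum_{i,j=1}^{n}\phi_{t}(Y_{i}(t),Y_{j}(t)),\qquad \phi_{t}(x,y)=[1_{(x\leq y)}-G_{t}(y)]q_{t}(y),
\end{equation*}
a degree-$2$ V-statistic whose kernel satisfies $\mathbb{E}\phi_{t}(Y_{1}(t),Y_{2}(t))=0$. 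Since $\mathbb{E}[\phi_{t}(Y_{1}(t),y)]\equiv 0$, only one Hoeffding projection is non-trivial, namely $\phi_{t}^{(1)}(x)=\int_{x}^{\infty}q_{t}(y)\,dG_{t}(y)-\mathbb{E}_{t}G_{t}q_{t}$. The standard V-statistic decomposition then yields
\begin{equation*}
\beta_{n}(t)=\frac{1}{\sqrt{n}}\sum_{j=1}^{n}\phi_{t}^{(1)}(Y_{j}(t))+R_{n}(t),
\end{equation*}
where $R_{n}(t)$ gathers the diagonal terms and the fully degenerate $U$-statistic part, both $O_{P}(n^{-1/2})$ pointwise. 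A direct calculation gives $\mathrm{Var}\,\phi_{t}^{(1)}(Y(t))=c_{2}(t)-(\mathbb{E}_{t}G_{t}q_{t})^{2}$ and $\mathrm{Cov}(\phi_{t}^{(1)}(Y(t)),\phi_{s}^{(1)}(Y(s)))=g(q,t,s)-(\mathbb{E}_{t}G_{t}q_{t})(\mathbb{E}_{s}G_{s}q_{s})=\Gamma_{1}(q_{t},q_{s},s,t)$, matching the covariance in the statement.

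Finite-dimensional convergence of the leading term is then immediate from the multivariate CLT applied to the i.i.d.\ vectors $(\phi_{t_{1}}^{(1)}(Y_{j}(t_{1})),\ldots,\phi_{t_{k}}^{(1)}(Y_{j}(t_{k})))$. For $\ell^{\infty}([0,T])$-tightness I plan to establish a Kolmogorov-type moment inequality $\mathbb{E}|\beta_{n}(t)-\beta_{n}(s)|^{2}\leq K|t-s|^{1+r}$ for $|s-t|\leq\delta$. Expanding the second moment using the V-statistic representation and grouping like-indexed pairs produces, as leading terms, quantities of the form $2(c_{2}(t)-g(q,t,s))$ plus cross terms mixing $\mathbb{E}_{t}G_{t}q_{t}$ and $\mathbb{E}_{s}G_{s}q_{t}$; these are exactly the combinations controlled by (\ref{unif04}). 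The Hölder-type bound then feeds into a standard chaining argument for real-parameter processes, yielding asymptotic equicontinuity and hence tightness in $\ell^{\infty}([0,T])$.

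The main obstacle is making the Hoeffding linearization uniform in $t$, that is, showing $\sup_{t\in[0,T]}|R_{n}(t)|=o_{P}^{*}(1)$ rather than merely pointwise vanishing. This requires controlling the supremum of a degenerate $U$-process indexed by $t$, which I would address by exhibiting a manageable bracketing envelope for the class $\{\phi_{t}:t\in[0,T]\}$; the boundedness of $q_{t}$ together with continuity and monotonicity of $G_{t}$ in $x$ (imposed in the paper) should keep the bracketing entropy finite. Once uniform linearization is in hand, combining the pointwise CLT with the increment bound extracted from (\ref{unif04}) yields weak convergence of $\beta_{n}$ to the announced Gaussian process.
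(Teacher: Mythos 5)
Your proposal is correct in outline but reaches the finite-dimensional limit by a genuinely different route than the paper. You rewrite $\beta_n(t)=n^{-3/2}\sum_{i,j}\phi_t(Y_i(t),Y_j(t))$ with $\phi_t(x,y)=[1_{(x\leq y)}-G_t(y)]q_t(y)$ and apply a Hoeffding projection; since $\mathbb{E}\,\phi_t(Y_1(t),y)\equiv 0$, only the projection $\phi_t^{(1)}(x)=\int_{y\geq x}q_t(y)\,dG_t(y)-\mathbb{E}_tG_tq_t$ survives, the remainder is $O_P(n^{-1/2})$ pointwise, and the ordinary multivariate CLT delivers the finite-dimensional laws with exactly the covariance $\Gamma_1$ (your variance and covariance computations for $\phi_t^{(1)}$ check out against $c_2(t)-(\mathbb{E}_tG_tq_t)^2$ and $g(q,t,s)-(\mathbb{E}_tG_tq_t)(\mathbb{E}_sG_sq_s)$, and your remainder even reproduces the paper's mean $(\mathbb{E}_tq_t-\mathbb{E}_tq_tG_t)/\sqrt{n}$). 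The paper instead passes through the representation (\ref{marge02}) of $\beta_n(t)$ as an integral of the uniform quantile process and invokes Lemma \ref{lemmatool} on the joint convergence of the marginal quantile processes, proved via a VC-class argument for the functional empirical process. Your projection route is more self-contained and makes the limiting covariance transparent; the paper's route leans on its companion computations in \cite{lo3} and on classical quantile-process theory. For tightness the two arguments coincide: both convert hypothesis (\ref{unif04}) into a second-moment increment bound of order $|t-s|^{1+r}$ and chain (Example 2.2.12 of \cite{vaart}). Two small remarks: (i) the obstacle you single out, namely $\sup_{t}|R_n(t)|=o_P^{*}(1)$, is not actually needed under your own plan, because you establish asymptotic tightness for $\beta_n$ itself rather than for its linearization, and pointwise negligibility of $R_n$ suffices for the finite-dimensional limits; (ii) in the chaining step one must still absorb the non-H\"older $O(1/n)$ term appearing in the exact increment formula (e.g.\ by discretizing at an $n$-dependent mesh), a point the paper also leaves implicit by deferring to Lemma 1 of \cite{sall-lo}.
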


\bigskip

\begin{remark}
As announced, we will give in the beginning of the proof of this theorem
examples of computations needed in proving the results of these paper. Full,
detailed and complete ones are stated in \cite{lo3}.
\end{remark}

\begin{proof}
Let 
\begin{equation*}
\beta _{n}^{\ast }(t)=\sqrt{n}\beta _{n}(t).
\end{equation*}%
We  begin to calculate the two first moments and the covariance function.

\bigskip \noindent \textbf{Mean calculation}. One has 
\begin{equation*}
\mathbb{E}\beta _{n}^{\ast }(t)=\mathbb{E}%
\sum_{j=1}^{n}G_{t,n}(Y_{j}(t))q_{t}(Y_{j}(t))-n(\mathbb{E}%
q_{t}(Y(t))(G_{t}(Y(t)).
\end{equation*}%
But 
\begin{equation*}
nG_{t,n}(Y_{j}(t))q_{t}(Y_{j}(t))=q_{t}(Y_{j}(t))+\sum_{h\neq j}1_{\left(
Y_{h}(t)\leq Y_{j}(t)\right) }q_{t}(Y_{j}(t))
\end{equation*}%
and 
\begin{equation*}
\mathbb{E}nG_{t,n}(Y_{j}(t))q_{t}(Y_{j}(t))=\mathbb{E}q_{t}(Y(t))+(n-1)\int
q_{t}(u)dG_{t}(u)\int_{x\geq u}dG_{t}(x)
\end{equation*}%
\begin{equation*}
=\mathbb{E}_{t}q_{t}+(n-1)\int G_{t}(u)q_{t}(u)dG_{t}(u).
\end{equation*}%
Recall the convention $\mathbb{E}_{t}b=\mathbb{E}(b(Y(t))).$ We get 
\begin{equation*}
\mathbb{E}G_{t,n}(Y_{j}(t))q_{t}(Y_{j}(t))=\frac{\mathbb{E}_{t}q_{t}-\mathbb{%
E}_{_{t}}q_{t}G_{t}}{n}+\mathbb{E}_{t}q_{t}G_{t}.
\end{equation*}%
This gives 
\begin{equation*}
\mathbb{E}\beta _{n}^{\ast }(t)=\mathbb{E}_{_{t}}q_{t}-\mathbb{E}%
_{_{t}}q_{t}G_{t}
\end{equation*}%
and 
\begin{equation*}
\mathbb{E}\beta _{n}(t)=\left( \mathbb{E}_{_{t}}q_{t}-\mathbb{E}%
_{_{t}}q_{t}G_{t}\right) /\sqrt{n}\rightarrow 0.
\end{equation*}%
\noindent \textbf{Variance calculation}. Direct calculations like the
previous give : 
\begin{equation*}
\mathbb{E}\beta _{n}(t)^{2}=c_{2}(t)-(\mathbb{E}_{t}G_{t}q_{t})^{2}+\frac{%
K_{1}(t,s)}{n},
\end{equation*}%
where $K_{1}(t,s)$ is uniformly bounded. Before we arrive at the covariance
function. We should observe that for $q_{t}=1,$ then $c_{2}=1/3,(\mathbb{E}%
_{t}G_{t}q)^{2}=1/4$ and 
\begin{equation*}
c_{2}(t)-(\mathbb{E}_{t}G_{t}q_{t})^{2}=1/12.
\end{equation*}

\bigskip

\noindent \textbf{Covariance calculations. }We also have%
\begin{equation*}
\mathbb{E}\beta _{n}(t)\beta _{n}(s)=g(q,t,s)-(\mathbb{E}_{t}G_{t}q_{t})(%
\mathbb{E}_{s}G_{s}q_{s})+\frac{K_{2}(n,t,s)}{n}.
\end{equation*}%
\begin{equation*}
=\Gamma _{1}(q_{t},q_{s},t,s)+\frac{K_{2}(n,t,s)}{n},
\end{equation*}%
where $K_{2}(n,t,s)$ is uniformly bounded in $(n,t,s).$ We finish to remark
that for $s=t,$ we get 
\begin{equation*}
\mathbb{E}\beta _{n}(t)^{2}\sim c_{2}(t)-(E_{t}G_{t}q)^{2}.
\end{equation*}%
We now consider the increments of $\beta _{n}(t).$

\bigskip \noindent \textbf{Increments calculations.}

\noindent Recall that 
\begin{equation*}
E\beta _{n}(t)^{2}=c_{2}(t)-(E_{t}G_{t}q)^{2}+\frac{K_{1}(n,t)}{n}.
\end{equation*}
This gives 
\begin{equation*}
E\left( \beta _{n}(t)-\beta _{n}(s)\right) ^{2}=2(c_{2}(t)-g(q,t,s))
\end{equation*}

\begin{equation}
+\left\{(E_{t}G_{t}q)(E_{s}G_{s}q)-(E_{t}G_{t}q)^{2}\right\} +\frac{K_{3}(n,t,s)}{n}.
\label{unif03}
\end{equation}

\bigskip \noindent \textbf{Proofs of the weak convergence.}

We always begin to show the weak convergence of the finite-distribution of $%
\beta _{n}(\cdot )$ that is 
\begin{equation*}
\beta _{n}(t_{1},...,t_{k},a)=\sum_{j=1}^{k}\alpha _{j}\beta _{n}(t_{j})=%
\frac{1}{\sqrt{n}}\sum_{s=1}^{k}a_{s}\sum_{j=1}^{n}\left\{
G_{t_{s},n}(Y_{j}(t_{s}))-G(Y_{t_{s}})\right\} q_{t_{s}}(Y_{j}(t_{s})).
\end{equation*}%
$0<t_{0}<t_{1}<...<t_{k}\leq T,$ $a=(a_{1},...,a_{k})^{t}\in \mathbb{R}^{k}.$
\ We have 
\begin{equation}
\beta _{n}(t)=\int_{0}^{1}\sqrt{n}%
(s-V_{t,n}(s))q_{t}(G_{t}^{-1}(s))ds+O_{P}(1/\sqrt{n})=N_{n}^{\ast
}(q_{t},t)+O_{P}(1/\sqrt{n}).  \label{marge02}
\end{equation}%
The finite distribution is established by using Lemma \ref{lemmatool} below and
its application in section \ref{sectool}. The covariance function of the
limiting process is 
\begin{equation*}
\Gamma _{1}(q_{t_{i}},q_{t_{j}},t_{i},t_{j})=\lim_{n\rightarrow \infty } \mathbb{C}ov(N_{n}^{\ast
}(q_{t_{i}},t_{i}),N_{n}^{\ast }(q_{t_{j}},t_{j}))
\end{equation*}%
which, by (\ref{marge02}), is
\begin{equation*}
\Gamma _{1}(q_{t_{i}},q_{t_{j}},t_{i},t_{j})=\lim_{n\rightarrow \infty }%
\mathbb{C}ov(\beta _{n}(t),\beta _{n}(s))
\end{equation*}%
Finally (\ref{unif04}), (\ref{unif03}) together prove the asymptotic
tightness of $\beta _{n}$ via Lemma 1 in \cite{sall-lo} and Example 2.2.12
in \cite{vaart}.
\end{proof}

\newpage

\section{Addition of the processes and an empirical process}

\label{sec3}

In many situations, the asymptotic law of the studied statistics is achieved
in a sum of our process and an empirical process of the form 
\begin{equation*}
\gamma _{n}=\alpha _{n}+\beta _{n}
\end{equation*}
where 
\begin{equation*}
\gamma _{n}(t)=\frac{1}{\sqrt{n}}\sum_{j}(g_{t}(Y(t))-\eta (t))+\frac{1}{%
\sqrt{n}}\sum_{j}\left\{ G_{t,n}(Y_{j}(t))-G_{t}(Y_{j}(t))\right\}
q_{t}(Y_{j}(t)).
\end{equation*}
In such cases, what is the covariance structure of the limiting process?
We have this

\begin{proposition}
\bigskip If each of the processes $\gamma _{n},\alpha _{n}$ and $\beta _{n}$
converges in finite-distributions and is asymptotically tight, then
covariance function of the limiting Gaussian process of $\gamma _{n}$ is 
\begin{equation*}
\Gamma (t,s)=\Gamma _{1}(q_{t},q_{s},t,s)+\Gamma _{2}(t,s)+\gamma (t,s),
\end{equation*}
with 
\begin{equation*}
\Gamma _{2}(t,s)=\int (\overline{g}_{t}(x)-\eta (t))(\overline{g}%
_{s}(y)-\eta (s))dG_{t,s}(x,y),
\end{equation*}
\begin{equation*}
\Gamma _{2}(q_{t},q_{s},t,s)=g(q,t,s)-(\mathbb{E}_{t}G_{t}q_{t})(\mathbb{E}%
_{s}G_{s}q_{s}),
\end{equation*}
\begin{equation*}
g(q_{t},q_{s},t,s)=\int \left( \int_{x\geq u}q_{t}(u)dG_{t}(u)\right) \left(
\int_{x\geq v}q_{t}(v)dG_{t}(v)\right) dG_{t,s}(u,v)
\end{equation*}
and 
\begin{equation*}
\gamma (t,s)=\gamma _{1}(t,s)+\gamma _{1}(s,t),
\end{equation*}
with 
\begin{equation*}
\gamma _{1}(t,s)=\int \overline{g}_{t}(u)\left( \int_{x\geq
u}q(x)dG_{s}(u)\right) dG_{t,s}(u,v).
\end{equation*}
\end{proposition}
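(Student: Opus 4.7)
The plan is purely computational. Since $\gamma_n = \alpha_n + \beta_n$, bilinearity of covariance gives
\[
\mathrm{Cov}(\gamma_n(t), \gamma_n(s)) = \mathrm{Cov}(\alpha_n(t), \alpha_n(s)) + \mathrm{Cov}(\beta_n(t), \beta_n(s)) + \mathrm{Cov}(\alpha_n(t), \beta_n(s)) + \mathrm{Cov}(\alpha_n(s), \beta_n(t)),
\]
and by the hypothesis the left-hand side converges to the covariance of the limiting Gaussian $\gamma$, so it suffices to pass to the limit term-by-term on the right. The first and second limits are supplied by earlier work; the task is really to identify the limit of the cross-covariance and show it equals $\gamma(t,s)$.

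\textbf{Diagonal pieces.} The limit of $\mathrm{Cov}(\alpha_n(t), \alpha_n(s))$ is the standard functional empirical covariance: since $\alpha_n$ is a normalised sum of i.i.d.\ centred terms, cross-sample contributions vanish by independence and what remains is the single-sample expectation $E[(\bar g_t(Y_1) - \eta(t))(\bar g_s(Y_1) - \eta(s))]$, which is precisely $\Gamma_2(t,s)$ after unfolding against the joint law $G_{t,s}$. The limit of $\mathrm{Cov}(\beta_n(t), \beta_n(s))$ is $\Gamma_1(q_t, q_s, t, s)$, already furnished (modulo an $O(1/n)$ remainder) by the covariance computation in the proof of Theorem~1.

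\textbf{Cross term.} For $\mathrm{Cov}(\alpha_n(t), \beta_n(s))$, I would set $f_k = \{G_{s,n}(Y_k(s)) - G_s(Y_k(s))\} q_s(Y_k(s))$ and expand $n G_{s,n}(Y_k(s)) = \sum_h \mathbf{1}(Y_h(s) \leq Y_k(s))$, giving
\[
\mathrm{Cov}(\alpha_n(t), \beta_n(s)) = \frac{1}{n} \sum_{j,k} \mathrm{Cov}\bigl(\bar g_t(Y_j(t)), f_k\bigr).
\]
The $n^2$ pairs $(j,k)$ split into the diagonal $j=k$ and the off-diagonal $j \neq k$, and within each part the index $h$ further refines into $h = j$, $h = k$, or $h$ distinct from both. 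Fully generic triples yield independence and hence zero covariance; the diagonal $j = k$ contributes only $O(1/n)$ by exactly the cancellation that reduced $\mathbb{E}\beta_n^*(t)$ to $O(1)$ in Theorem~1. The surviving piece is $j \neq k$ with $h = j$, and after Fubini and identification of the inner integral $\int_{y \geq u} q_s(y)\, dG_s(y)$ it collapses to $\gamma_1(t,s)$. Exchanging $s$ and $t$ in the second cross-term produces $\gamma_1(s,t)$, and the two sum to $\gamma(t,s)$ as claimed.

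\textbf{Main obstacle.} The delicate point is the combinatorial bookkeeping of the $n^{-1}$ prefactors in the cross term: one must check that every subdominant coincidence pattern (the diagonal $j = k$, and the off-diagonal $j \neq k$ with $h \in \{j,k\}$) contributes only $o(1)$ and leaves no residue in the limit. These verifications parallel the mean and variance cancellations already performed in Theorem~1, so no new technical device is required; only patience in the index accounting.
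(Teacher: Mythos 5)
Your proposal is correct and follows essentially the same route as the paper: expand $\gamma_n(t)\gamma_n(s)$ bilinearly into the two diagonal terms and the two cross terms, take expectations, and identify $\Gamma_2$, $\Gamma_1$ and $\gamma_1(t,s)+\gamma_1(s,t)$ as the respective limits. In fact your treatment of the cross term (expanding $nG_{s,n}$ into indicators and sorting the index coincidences $j=k$, $h\in\{j,k\}$) supplies the bookkeeping that the paper's proof only asserts with the phrase ``by computing the expectation of each of them.''
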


\begin{remark}
We are not interesting here by complete results. We only intend to show
how the process intervenes in general L-Statistics and to give the
covariance function. In each particlucar,we will have to prove the finite-distribution
convergence and the tightness of the components of such processes.
\end{remark}

\begin{proof}
If the hypotheses of the proposition hold, the limiting covariance function
is performed through the formula%
\begin{equation*}
\gamma _{n}(t)\gamma _{n}(s)=\left( \alpha _{n}(t)+\beta _{n}(t)\right)
\left( \alpha _{n}(s)+\beta _{n}(s)\right) 
\end{equation*}%
\begin{equation*}
=\alpha _{n}(t)\alpha _{n}(s)+\alpha _{n}(t)\beta _{n}(s)+\beta
_{n}(t)\left( \alpha _{n}(s)+\beta _{n}(t)\beta _{n}(s)\right) ).
\end{equation*}%
By computing the expectation of each of them, we arrive at 
\begin{equation*}
\gamma _{1}(t,s)=\int \overline{g}_{t}(u)\left( \int_{x\geq
u}q(x)dG_{s}(u)\right) dG_{t,s}(u,v),\gamma (t,s)=\gamma _{1}(t,s)+\gamma
_{1}(s,t).
\end{equation*}
\end{proof}

\section{Covariance function of two processus}

\label{sec4}

In some applications, we may be led to simultaneously consider two or
several processes of the kind (\ref{def01}). In this case, their covariance
function may be useful. Consider 
\begin{equation*}
\beta _{n,2}(t)=\frac{1}{\sqrt{n}}\sum_{j}\left\{
G_{t,n}(Y_{j}(t))-G_{t}(Y_{j}(t))\right\} q_{1,t}(Y_{j}(t))
\end{equation*}
and 
\begin{equation*}
\beta _{n,1}(t)=\frac{1}{\sqrt{n}}\sum_{j}\left\{
G_{t,n}(Y_{j}(t))-G_{t}(Y_{j}(t))\right\} q_{2,t}(Y_{j}(t)).
\end{equation*}
We will have the result

\begin{proposition}
If the two processes are both asymptotically tight and converge in
finite-distribution, then their limiting Gaussian processes have the
following covariance 
\begin{equation*}
\Gamma
_{3}(t,s)=g(q_{1,t},q_{2,s},t,s)-((E_{t}G_{t}q_{1})(E_{s}G_{s}q_{2})+(E_{t}G_{t}q_{1})(E_{s}G_{s}q_{2})),
\end{equation*}
and 
\begin{equation*}
g(q_{1,t},q_{2,s},t,s)=g_{1}(q_{1,t},q_{2,s},t,s)+g_{1}(q_{1,s},q_{2,t},s,t)
\end{equation*}
with 
\begin{equation*}
g_{1}(q_{1,t},q_{2,s},t,s)=\int \left( \int_{x\geq
u}q_{1,t}(u)dG_{t}(u)\right) \left( \int_{x\geq v}q_{2,s}(v)dG_{s}(v)\right)
dG_{t,s}(u,v).
\end{equation*}
\end{proposition}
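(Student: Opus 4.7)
The plan is to mimic the covariance computation carried out in the proof of Theorem 1, but now for a cross-covariance between two processes with different weight functions $q_{1,t}$ and $q_{2,t}$. Since both $\beta_{n,1}$ and $\beta_{n,2}$ are assumed asymptotically tight and convergent in finite-dimensional distributions, the pair $(\beta_{n,1},\beta_{n,2})$ is tight in the product space $\ell^{\infty}([0,T])^{2}$; applying the Cram\'er--Wold device to arbitrary linear combinations $a\beta_{n,1}(t)+b\beta_{n,2}(s)$ and using Theorem 1-type arguments shows that the joint limit is Gaussian. Consequently it suffices to identify $\lim_{n\to\infty}\mathbb{C}ov(\beta_{n,1}(t),\beta_{n,2}(s))$.

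First, by the mean calculation of Theorem 1, $\mathbb{E}\beta_{n,i}(u)=O(n^{-1/2})$, so the product of the means contributes $O(n^{-1})$ and vanishes in the limit. Thus I focus directly on $\mathbb{E}[\beta_{n,1}(t)\beta_{n,2}(s)]=\frac{1}{n}\mathbb{E}[\beta_{n,1}^{\ast}(t)\beta_{n,2}^{\ast}(s)]$. Next, I expand $n\,G_{t,n}(Y_{j}(t))=\sum_{h}\mathbf{1}(Y_{h}(t)\le Y_{j}(t))$ and similarly $n\,G_{s,n}(Y_{k}(s))=\sum_{\ell}\mathbf{1}(Y_{\ell}(s)\le Y_{k}(s))$, producing a quadruple sum over $(j,h,k,\ell)$. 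I then subtract the centering terms $G_{t}(Y_{j}(t))$ and $G_{s}(Y_{k}(s))$, which turns the expression into a sum of products of centered quantities.

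The third step is the case analysis on coincidences among the indices $(j,h,k,\ell)$. The diagonal cases $h=j$ or $\ell=k$ give $O(n^{-1})$ contributions and are absorbed into a uniformly bounded remainder $K/n$ as in the proof of Theorem 1. The surviving leading contributions correspond to $(j,h,k,\ell)$ all distinct, split into the two geometrically different configurations $\{h\neq j,\ k\neq \ell,\ j\neq k\}$ with different pairings between the outer indices and the inner comparison indices. One configuration produces, after passing to expectations and using the joint law $G_{t,s}$, the integral
\begin{equation*}
g_{1}(q_{1,t},q_{2,s},t,s)=\int\!\left(\int_{x\geq u}q_{1,t}(u)\,dG_{t}(u)\right)\!\left(\int_{x\geq v}q_{2,s}(v)\,dG_{s}(v)\right)dG_{t,s}(u,v),
\end{equation*}
while the other, by swapping the roles of the two factors, produces $g_{1}(q_{1,s},q_{2,t},s,t)$; their sum yields the symmetric quantity $g(q_{1,t},q_{2,s},t,s)$. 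The centering subtracts the products $(\mathbb{E}_{t}G_{t}q_{1})(\mathbb{E}_{s}G_{s}q_{2})$ (one per configuration), leaving exactly the stated formula for $\Gamma_{3}(t,s)$.

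The main obstacle is not conceptual but combinatorial: the careful bookkeeping of the index-coincidence sub-cases in the quadruple sum and the verification that all non-leading contributions are $O(n^{-1})$ uniformly in $(t,s)$, exactly parallel to the $K_{1}(t,s)/n$ and $K_{2}(n,t,s)/n$ remainders that appear in the variance and covariance calculations of Theorem 1. Once this bookkeeping is in place, the result follows by matching terms, and the remaining ingredients (joint tightness from the hypotheses, Gaussianity of the limit from finite-dimensional convergence of linear combinations) are immediate.
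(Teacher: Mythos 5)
The paper states this proposition with no proof whatsoever (Section 4 ends at the displayed formulas), so there is no argument of the author's to compare yours against; your plan --- reduce to $\lim_n\mathbb{E}[\beta_{n,1}(t)\beta_{n,2}(s)]$ because the means are $O(n^{-1/2})$, expand $nG_{t,n}(Y_j(t))=\sum_h 1_{(Y_h(t)\le Y_j(t))}$, and classify the quadruple sum by index coincidences --- is the same direct moment computation that the paper sketches for the variance and covariance in the proof of Theorem 1, and it is the right framework. The joint tightness and Cram\'er--Wold remarks are also fine.

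The genuine gap is in the coincidence bookkeeping, which you yourself identify as the crux. Write $A_{jh}=\{1_{(Y_h(t)\le Y_j(t))}-G_t(Y_j(t))\}q_{1,t}(Y_j(t))$ and $B_{k\ell}=\{1_{(Y_\ell(s)\le Y_k(s))}-G_s(Y_k(s))\}q_{2,s}(Y_k(s))$, so that $\mathbb{E}[\beta_{n,1}(t)\beta_{n,2}(s)]=n^{-3}\sum_{j,h,k,\ell}\mathbb{E}[A_{jh}B_{k\ell}]$. For $h\ne j$ one has $\mathbb{E}[A_{jh}\mid Y_j]=0$, so when all four indices are distinct the two factors are independent and centered and the term is \emph{exactly zero} --- contrary to your claim that the all-distinct quadruples carry the leading contribution. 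Among the $O(n^3)$ singly-coincident configurations, conditioning on the shared observation kills three of the four: if $j=k$, or $j=\ell$, or $h=k$, the remaining indicator-deviation is still conditionally centered given the shared variable, so those expectations vanish as well. Only $h=\ell$ survives, and conditioning on $Y_h$ gives $\mathbb{E}[A_{jh}B_{kh}]=\int\bigl(\int_{x\ge u}q_{1,t}\,dG_t-\mathbb{E}_tG_tq_{1,t}\bigr)\bigl(\int_{y\ge v}q_{2,s}\,dG_s-\mathbb{E}_sG_sq_{2,s}\bigr)dG_{t,s}(u,v)$, i.e.\ a \emph{single} $g_1(q_{1,t},q_{2,s},t,s)$ minus a \emph{single} product $(\mathbb{E}_tG_tq_{1,t})(\mathbb{E}_sG_sq_{2,s})$. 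This is consistent with the non-symmetrized $\Gamma_1$ of Theorem 1 when $q_{1}=q_{2}$, whereas the ``second configuration'' you invoke to manufacture the symmetrized $g$ and the second subtracted product does not exist. So either the displayed $\Gamma_3$ is a misprint (as its two identical subtracted terms already suggest) or it refers to a symmetrized object not defined here; in either case your derivation as written does not establish the stated formula, and the case analysis must be redone before the terms can be matched.
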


\section{\protect\bigskip A useful tool}

\label{sectool}

We give here a useful lemma on which, is be based the asymptotic
finite-distribution normality of the processes involved here. It will be
enough to describe it in the two dimensional case. A generelization to the
k-dimensional case is straightforward. We have

\begin{lemma}
\label{lemmatool} Let $(X_{i},Y_{i})$ , $i=1,2,...,$ be \ independent
observations of a random vector (X,Y) with joint distribution function $G(x,y)=P(X\leq x,Y\leq
y)$, and margins $G_{1}(x)=G(x,+\infty )$ and $G_{2}(y)=G(+\infty ,y)$.  Let, for each $n\geq 1,$ $%
\varepsilon _{1,n}$ and $\varepsilon _{2,n}$ be the quantile processes based
respectively on $G_{1}(X_{1}),G_{1}(X_{2}),...,G_{1}(X_{n})$, and on $%
G_{2}(Y_{1}),G_{2}(Y_{2}),...,G_{2}(Y_{n}).$ Then $\varepsilon
_{n}=(\varepsilon _{1,n},\varepsilon _{2,n})$ converges in distribution to a
Gaussian process $\varepsilon =(\varepsilon _{1},\varepsilon _{2})$ in $%
(\ell ^{\infty }([0,1]))^{2}$ such that each $\varepsilon _{i}$ is a
standard Brownian bridge.
\end{lemma}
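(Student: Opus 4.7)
The plan is to reduce joint weak convergence of the two quantile processes to joint weak convergence of the corresponding uniform empirical processes via the classical asymptotic equivalence between them, and then to establish the joint empirical convergence through the multivariate Donsker theorem applied to a VC class.

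First I would set $U_i = G_1(X_i)$ and $V_i = G_2(Y_i)$. Since $G_1$ and $G_2$ are continuous (as assumed throughout the paper), each marginal sample is iid uniform on $[0,1]$, and the pairs $(U_i,V_i)$ are iid on $[0,1]^2$ with joint distribution $C(u,v)=G(G_1^{-1}(u),G_2^{-1}(v))$, i.e. the copula of $G$. Denote by $\alpha_{i,n}(s)=\sqrt{n}(F_{i,n}(s)-s)$ the uniform empirical processes built from $U_1,\ldots,U_n$ and $V_1,\ldots,V_n$ respectively.

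Second, I would invoke the Bahadur--Kiefer representation (Cs\"org\H{o}--R\'ev\'esz, Shorack--Wellner): for each $i=1,2$,
$$\sup_{s\in[0,1]}\,|\varepsilon_{i,n}(s)+\alpha_{i,n}(s)|\;\longrightarrow\;0\quad\text{in probability.}$$
Because both statements are sup-norm, joint weak convergence of $(\varepsilon_{1,n},\varepsilon_{2,n})$ in $(\ell^\infty([0,1]))^2$ is equivalent to joint weak convergence of $(-\alpha_{1,n},-\alpha_{2,n})$ in the same space.

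Third, for the joint convergence of the empirical processes, I would index a functional empirical process on $[0,1]^2$ by the class
$$\mathcal{F}=\{\mathbf{1}_{(-\infty,s]\times\mathbb{R}}:s\in[0,1]\}\cup\{\mathbf{1}_{\mathbb{R}\times(-\infty,t]}:t\in[0,1]\},$$
which is the union of two VC-subgraph classes and hence Donsker. The functional CLT (van der Vaart--Wellner) then yields a tight centered Gaussian limit in $\ell^\infty(\mathcal{F})$; projecting onto each family of indicators gives $(\alpha_{1,n},\alpha_{2,n})\Rightarrow(B_1,B_2)$ with each $B_i$ a standard Brownian bridge and cross-covariance $\mathbb{E}B_1(s)B_2(t)=C(s,t)-st$.

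The hardest step is conceptually the Bahadur--Kiefer reduction, but in practice it is routine here: the two marginal sup-norm statements dominate the bivariate sup-norm of the difference, so joint validity is automatic. The remaining work is simply identifying the cross-covariance from the copula and observing that for the finite-dimensional convergence actually used in Section~\ref{sec2}, one may bypass the functional Donsker step entirely by applying the multivariate central limit theorem directly to the iid vectors $(\mathbf{1}_{U_i\le s_1},\ldots,\mathbf{1}_{V_i\le t_k})$ at any finite collection of times.
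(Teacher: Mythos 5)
Your proposal is correct and follows essentially the same route as the paper's own proof: reduce the quantile processes to the uniform empirical processes via the Shorack--Wellner (Bahadur--Kiefer) equivalence $\varepsilon_{i,n}=-\alpha_{i,n}+o_{P}(1)$ uniformly, then obtain joint weak convergence of the pair of empirical processes from the functional Donsker theorem applied to a VC class of indicators on $[0,1]^{2}$ (the paper uses the rectangles $[0,s]\times[0,t]$ and specializes, while you use the union of the two marginal half-space classes --- an immaterial difference). Your added identification of the cross-covariance through the copula and the remark that the finite-dimensional convergence follows from the multivariate CLT are consistent refinements, not a different argument.
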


\begin{proof}
Let for each $n\geq 1$, $\alpha _{1,n}$ and $\alpha _{2,n}$ be the empirical
processes based respectively on $G_{1}(X_{1}),G_{1}(X_{2}),...,G_{1}(X_{n})$
and on $G_{2}(Y_{1}),G_{2}(Y_{2}),...,G_{2}(Y_{n}).$ We have (see \cite%
{shwell}, p.584) that $\alpha _{i,n}(s)=-\varepsilon _{i,n}(s)+o_{P}(1)$
uniformly in $s\in (0,1),$ which gives 
\begin{equation*}
\varepsilon _{n}(s,t)=(\varepsilon _{1,n}(s),\varepsilon _{2,n}(t))=-(\alpha
_{1,n}(s),\alpha _{2,n}(t))=o_{P}(1),
\end{equation*}
uniformly in $(s,t)\in (0,1)^{2}.$ Now let us consider the functional
empirical process $\alpha _{n}$\ based on the $%
Z_{i}=(G_{1}(X_{i}),G_{2}(Y_{i})),$ that is 
\begin{equation*}
\alpha _{n}(f)=\frac{1}{\sqrt{n}}\sum_{j=1}^{n}f(Z_{i})-\mathbb{E}f(Z_{i}),
\end{equation*}
for a real function defined on $(0,1)^{2}$ such that $\mathbb{E}%
f(Z_{i})^{2}<\infty .$ We have by the classical results of empirical process
that $\{\alpha _{n}(f),$ $f\in \mathcal{F}\}$ converges to a Gaussian
process $\{\mathbb{G}(f),f\in \mathcal{F\}}$ whenever $\mathcal{F}$ is a
donsker class. It follows that $\{\alpha _{n}(1_{C}),$ $C\in \mathcal{C}\}$
converges to a Gaussian process $\{\mathbb{G}(1_{C}),C\in \mathcal{C\}}$
whenever $\mathcal{C}$ is a Vapnik-Cervonenkis class ($VP$-class). But $\mathcal{C}%
=\{1_{[0,s]\times \lbrack 0,t\rbrack},(t,s)\in (0,1)^{2}\}$ is a $VP$-class of index
not greater of 2. (see \cite{vaart} for $VP$-classes use to empirical
processes). Thus, putting $f_{s,t}=1_{[0,s]\times \lbrack 0,t]},$, we have
\begin{equation*}
\alpha _{n}(s,t)\equiv \alpha _{n}(f_{s,t})\leadsto \mathbb{G}%
(f_{s,t})\equiv \mathbb{G}(s,t)
\end{equation*}
in $(\ell ^{\infty }([0,1]))^{2}$, where $\leadsto$ stands for the weak convergence.  Now, by using the Skorohod-Wichura-Dudley
Theorem, we are entitled to suppose that we are on a probability space such
that 
\begin{equation*}
\sup_{(s,t)\in (0,1)^{2}}\left| \alpha _{n}(f_{s,t})-\mathbb{G}%
(f_{s,t})\right| \rightarrow _{P}0.
\end{equation*}
Now, put $f_{1,s}=1_{[0,s]\times \lbrack 0,1]},$ $f_{2,t}=1_{[0,1]\times
\lbrack 0,t]},$ $\mathbb{G}_{1}(s)=\mathbb{G}_{1}(f_{1,s})$ and $\mathbb{G}%
_{2}(t)=\mathbb{G}_{1}(f_{2,t}).$ We have 
\begin{equation*}
\alpha _{n}(f_{i,s})=\alpha _{1,n}(s)=\mathbb{G}_{1}(s)+o_{P}(1),
\end{equation*}
uniformly in $s\in (0,1).$ We finally have 
\begin{equation*}
\alpha _{n}(s,t)=(\mathbb{G}_{1}(s),\mathbb{G}_{2}(t))+o_{P}(1),
\end{equation*}
uniformly in $(s,t)\in (0,1)^{2}$. Clearly, $(\mathbb{G}_{1}(s),\mathbb{G}%
_{2}(t))$ is a Gaussian process and each $\mathbb{G}_{i}$ is the standard
Brownian bridge.
\end{proof}

\begin{application}
Let us consider the two-dimensional distribution $\beta _{n}(t_{1},t_{2},a)$%
\ like in (\ref{marge02}), which is 
\begin{equation*} 
\frac{1}{\sqrt{n}}\left\{ {a_{1}\sum_{j=1}^{n}\left\{
G_{t_{1},n}(Y_{j}(t_{1}))-G(Y_{t_{2}})\right\}
q_{t_{1}}(Y_{j}(t_{1}))}\right.
\end{equation*}

\begin{equation*} 
\left. + a_{1}\sum_{j=1}^{n}\left\{
G_{t_{2},n}(Y_{j}(t_{2}))-G(Y_{t_{2}})\right\}
q_{t_{2}}(Y_{j}(t_{2}))\right\} .
\end{equation*}

Using the notations around (\ref{marge02}), we have 
\begin{equation*}
\beta
_{n}(t_{1},t_{2},a)=a_{1}N_{1}(q_{t_{1},}t_{1})+a_{2}N_{1}(q_{t_{2},}t_{2})+o_{P}(1)
\end{equation*}
\begin{equation*}
=\int_{0}^{1}\left\{
a_{1}G_{1}(s)q_{t_{1}}(s)+a_{2}G_{2}(s)q_{t_{2}}(s)\right\} ds+o_{P}(1),
\end{equation*}
\begin{equation*}
\rightarrow N(a_{1},a_{2})=\int_{0}^{1}\left\{
a_{1}G_{1}(s)q_{t_{1}}(s)+a_{2}G_{2}(s)q_{t_{2}}(s)\right\} ds,
\end{equation*}
which is a Gaussian random variable.
\end{application}

\bigskip

\section{Conclusion}
We have entirely described the weak law of empirical stochastic processes like
(\ref{def01}) as well that of such processes and a functional emprical processes. Such results
have potential powerful applications in deriving uniform time-dependent L-statistics as done in
\cite{lo4}, where the time-dependant general poverty index is sutudied. Applications of our results in Actuarial Sciences are under way.

\end{document}